\documentclass[a4paper,12pt,reqno,superscriptaddress,nofootinbib]{revtex4}
\usepackage[centertags]{amsmath}
\usepackage{amsfonts}
\usepackage{amssymb}
\usepackage{amsthm}
\usepackage{newlfont}
\usepackage{stmaryrd}
\usepackage{mathrsfs}
\usepackage{mathtools}
\usepackage{euscript}
\usepackage{graphicx}
\usepackage{enumerate}
\usepackage{todonotes}
\usepackage[normalem]{ulem} 

\usepackage{color}
\usepackage{floatrow}
\usepackage{caption}

\usepackage{tikz}
\usepackage{pgf}
\usetikzlibrary{positioning,fit,calc}
\usetikzlibrary{arrows,automata}
\usepackage{wrapfig}
\usepackage{subfigure}
\usepackage{amscd}
\usepackage{hyperref}


\theoremstyle{plain}
  \newtheorem{theorem}{Theorem}[section]
  
  \newtheorem{proposition}[theorem]{Proposition}
  
\theoremstyle{definition}

\theoremstyle{remark}





\newcommand{\opunit}{\text{1}\kern-0.22em\text{l}}



\DeclareMathAlphabet{\mathpzc}{OT1}{pzc}{m}{it}



\let\oldsqrt\sqrt
\def\sqrt{\mathpalette\DHLhksqrt}
\def\DHLhksqrt#1#2{%
\setbox0=\hbox{$#1\oldsqrt{#2\,}$}\dimen0=\ht0
\advance\dimen0-0.2\ht0
\setbox2=\hbox{\vrule height\ht0 depth -\dimen0}%
{\box0\lower0.4pt\box2}}


\begin{document}

\title{\textbf{Glassy states: the free Ising model on a tree }}

\author{Daniel Gandolfo}
\affiliation{Aix Marseille Univ, Universit\'{e} de Toulon, CNRS, CPT, Marseille, France}
\author{Christian Maes}
\affiliation{Instituut voor Theoretische Fysica, KU Leuven}
\author{Jean Ruiz}
\affiliation{Aix Marseille Univ, Universit\'{e} de Toulon, CNRS, CPT, Marseille, France}
\author{Senya Shlosman}
\affiliation{Aix Marseille Univ, Universit\'{e} de Toulon, CNRS, CPT, Marseille, France}
\affiliation{Skolkovo Institute of Science and Technology}
\affiliation{Institute of
Information Transmission Problems, RAS, Moscow}

\begin{abstract}
We consider the ferromagnetic Ising model on the Cayley tree and we investigate the
decomposition of the free state into extremal states below the spin glass
temperature. We show that this decomposition has uncountably many components.
The tail observable showing that the free state is not extremal
 is related to the Edwards-Anderson parameter, measuring the variance of the (random) magnetization obtained from
drawing boundary conditions from the free state.
\end{abstract}

\maketitle


\section{Introduction}

It is well known that the Ising model on a regular Cayley tree undergoes a  {second} order phase transition at the
critical temperature $T_{\operatorname{cr}}$, below which the Gibbs states $\mu_{+}$ and $\mu_{-}$, corresponding to
${+}$ and ${-}$ boundary conditions, are different extremal states. Unlike the usual $\mathbb{Z}^{d}$-lattice case, on the tree the behavior of the free state $\mu_{\varnothing}$, corresponding to
empty boundary conditions, is very rich. On $\mathbb{Z}^{d}$ we have
$\mu_{\varnothing}=\frac{1}{2}(\mu_{+}+\mu_{-})$, while on the tree that is  {trivially} true only in the uniqueness regime. Moreover, the free state is extremal for temperatures $T$ below  the
critical temperature $T_{\operatorname{cr}}$, until a certain \textit{spin-glass} temperature $T_{\operatorname{SG}}$, below which it stops to be  {extremal}.
The question of finding the temperature range of the ergodicity of the state
$\mu_{\varnothing}$ was open for twenty years, and was solved by Bleher, Ruiz
and Zagrebnov  in their 1995 paper  \cite{BRZ}. Soon after, a simpler argument
was provided by Ioffe, \cite{Io}. For a closely related communication theory problem,
see the "Broadcasting on trees" paper by Evans,  Kenyon, Peres and
Schulman \cite{EKPS}.

In the present paper we study the free state at temperatures below $T_{\operatorname{SG}}$.

A principal static feature of the spin glass phase is the
presence of infinitely many pure states; see the discussion in \cite{NS} and the references therein.  By a gauge transformation the spin glass on the tree and the ferromagnet are equivalent, except that random boundary conditions for the ferromagnet correspond to fixed boundary conditions for the spin-glass, as was for example discussed in \cite{CCST}. We show that the same phenomenon  {also} happens
in the ferromagnetic Ising model on the tree, {\it i.e.}, without randomness
in the interaction. Namely,  the free
state of the Ising model  below the spin-glass
temperature has a decomposition into extremal states which involves
uncountably many extremal states. That is why we call this state `glassy'. Our
result answers an older question of Arnout van Enter, \cite{VE}.

The next section contains the more detailed question with some notations and definitions. Our main decomposition result is given in Section 3. The remaining sections contain
further details and proofs.

\section{Notations and definitions}
Let ${\cal T}_k = (V, E)$ be the Cayley tree with branching ratio $k\geq 2.$
We consider the nearest neighbor Ising model, where spins $\sigma_x=\pm 1$ have a Gibbs distribution  at
temperature $T = 1 / \beta$ with boundary condition $\eta$ in a finite volume $\Lambda$ given by
\begin{equation}
  \label{IGM} {\mu} (\sigma) = Z^{- 1} \exp \left\{ \beta \sum_{\langle x,
  y \rangle} J_{xy} \sigma_x \sigma_y + \beta \sum_{\langle x, y \rangle}
  J_{xy} \sigma_x \eta_y \right\}
\end{equation}
Both sums run over nearest neighbors pairs, the first being over the pairs
$x \in \Lambda$,
$y \in \Lambda$, and the second one runs over sites $x \in \Lambda, y \notin
\Lambda$.  The infinite-volume Gibbs distributions are obtained as  {the convex hull of the set of all possible limit point} when $\Lambda$ grows to cover all the vertices of the tree.
The set of Gibbs distributions for a fixed temperature and interaction is convex, and its extreme points are called pure states: they cannot be decomposed into other states.
\\

In the ferromagnetic case we put $J_{xy} = 1$, while in the spin-glass model the
interaction is random: $J_{xy} = \pm 1$ with probability $1 / 2$ independently
for any pair $\langle x, y \rangle$.  We can also consider the random interaction
\begin{equation}
  \label{RandCoup} J_{xy} = \left\{ \begin{array}{l}
    - 1 \hspace{1em}  \hspace{1em} \operatorname{with} \operatorname{probability} \hspace{1em}
    p\\
    + 1 \hspace{1em}  \hspace{1em} \operatorname{with} \operatorname{probability} \hspace{1em}
    1 - p
  \end{array} \right.
\end{equation}
which interpolates between the ferromagnetic model $p = 0$ and the spin-glass
model $p = 1 / 2$.\\

It is well known that in the ferromagnetic case the phase transition happens at  the critical temperature
$T_{\operatorname{cr}} = 1 / \operatorname{arctanh} (1 / k)$. For $T>T_\text{cr}$ there is a unique infinite-volume Gibbs distribution and below that critical temperature the spontaneous magnetization $m^*(T)$ becomes nonzero, while
plus and minus boundary conditions give rise to different states:
\[
\langle \sigma_0 \rangle^T_{\pm} = \pm m^*(T), \qquad m^*(T)>0 \text{ iff } T<T_\text{cr}.
\]
We use the convention that 
the boundary condition  by which the infinite-volume Gibbs state is obtained is indicated by a subscript to the expectation $\langle \cdot\rangle$, and superscripts will be used to indicate further parameters like temperature or the choice of model in \eqref{RandCoup}.\\
There is yet another special temperature, $T_{\operatorname{SG}} = 1 /
    \operatorname{arctanh} \left( 1 / \sqrt{k} \right)$, called the spin-glass temperature.  It will often appear in what follows below, and it has various interpretations.  For the ferromagnetic model it is known that the free state, the infinite-volume Gibbs  distribution obtained by putting $\eta\equiv 0$ in \eqref{IGM}, is extreme for $T > T_\text{SG}$ while it is not for $T< T_\text{SG}$;
    see \cite{BRZ},\cite{Io},\cite{EKPS},\cite{MSW}.
    Hence the question that motivates the present paper: what is the decomposition
of the free state $ \langle\cdot\rangle^T_{\varnothing} $ (at these lower temperatures) into extremal ones?\\
Part of the question is also to understand why that extremality of the free state exactly stops at the spin-glass temperature, which in its origin characterizes a transition to glassy behavior in the spin-glass model.  For example consider the spin-glass state obtained via plus-boundary conditions and look at the (random)  {local} magnetization  {at the origin}:
\[
\langle \sigma_0\rangle^{T,\text{SG}}_+ = m(T, \{J_{xy}\})
\]
It depends on the independent random variables $J_{xy}$, which take values $\pm 1$ with equal probability.  The spin-glass temperature $T_{\operatorname{SG}}$ is that temperature below which $m(T, \{J_{xy}\})$ is a fluctuating quantity with a non-trivial distribution.  For example,{ the second moment, the Edwards--Anderson parameter, is positive only there:}
\[
q_{EA}(T) = \mathbb{E} [m^2(T, \{J_{xy}\})] >0 \qquad \text{ iff } T< T_\text{SG},
\]
where the expectation $\mathbb{E}[\cdot]$ is taken over the randomness $\{J_{xy}\}$.

\section{Decomposition of the free state}

In this section we present the decomposition of the free state into pure states and we explain that a continuum of them enters into it, at least
at low temperatures. Presumably, it is the case for all temperatures in the spin-glass region.

\subsection{The construction of the decomposition}

For any Gibbs state $\mu$ corresponding to
temperature $T$ we have that
\[
\mu\left(  \cdot\right)  =\int \mu(d\sigma)\,[\left\langle \cdot\right\rangle^{T}_\sigma]
\]
where $\sigma$ is a spin configuration drawn from $\mu$ and used as boundary condition at infinity.  The Gibbs distributions $\left\langle \cdot\right\rangle_\sigma^T$ are
 $\mu$-almost surely extreme. Hence, we have here a decomposition of $\mu$ into extreme Gibbs distributions, but obviously the states
 $\left\langle
 \cdot
\right\rangle_\sigma^T$ might well be the same for
different $\sigma$-s. Nevertheless this decomposition is nontrivial when $\mu$ is not extremal. 
It remains then to see how many and which different extremal states we get.\\

For the Ising model free state $\mu^T_{\varnothing}$ we will use the
Edwards-Sokal representation, \cite{mono}:%
\begin{equation}
\mu^T_{\varnothing}\left(  \cdot\right)  =\mathbb{E}_{q(T)}\left(
\mu^{ES}\left(  \cdot{\Huge |}n\right)  \right)   \label{01}%
\end{equation}
(In contrast to the more standard notation we prefer here to call $q(T) = 1-\tanh 1/T$ the
probability of removed (or closed) bonds.) On the tree, the random
cluster measure is generated by independent bond percolation and $n$ is
the resulting random bond configuration over which we take the expectation $\mathbb{E}_{q}$. The open bonds generate a partition
of the tree into maximal connected components. The measure $\mu^{ES}\left(
d\sigma{\Huge |}n\right)  $ is supported by the spin configurations $\sigma$ which
are constant on each connected component as specified by $n;$ these
constants take values $\pm 1$, independently with probability 1/2.\\

We can rewrite \eqref{01} on the tree by ordering the components using the following definitions.\\
Let $D$ be
a subset of bonds of the tree ${\cal T}_k$, and  consider the two Ising
spin configurations $\sigma^{D, \pm}$ \ on ${\cal T}_k$ defined as:
\begin{equation}\label{dspin}
      \sigma_0^{D, +}  =  + 1, \hspace{1em}
    \sigma_0^{D,_-} = - 1
\end{equation}
and
\begin{eqnarray*}
  \sigma_x^{D, \pm} & = & - \sigma_y^{D, \pm} \text{ \ \ \ $\text{for } (x, y)
  \in D$ }
  \\
  \sigma_x^{D, \pm} & = & + \sigma_y^{D, \pm} \text{ \ \ \ for } (x, y) \notin D
\end{eqnarray*}
That is, we fix the value of the spin at the root (say $0$) to be $+ 1$ or $- 1$,
and the nearest neighbor spins alternate iff the corresponding bond
belongs to the set $D$.

By $ \langle \cdot \rangle^T_{\sigma^{D, \omega}}$
we denote the Gibbs state of the ferromagnetic Ising model at inverse
temperature $T$ with the boundary condition $\sigma^{D, \omega}$ where $\omega
= \pm 1$ corresponds to the way the spin at the origin is chosen.

Let $p \in (0, 1)$. Take the set $D$ to be random: every bond
decides to be in $D$ with probability $p$ independently of the other bonds.
 Denote by $\mathbb{E}_p$
 the expectation with respect to that process.

\begin{proposition}
The following decomposition of the free state for the ferromagnetic Ising model on the tree holds for all temperatures $T$:
\begin{equation}\label{es}
 \langle\cdot\rangle^T_{\varnothing} = \frac 1{2} \mathbb{E}_{p(T)}
\left[ \langle \cdot\rangle^T_{\sigma^{D,+}} + \langle \cdot\rangle^T_{\sigma^{D,-}} \right],
 \end{equation}
 where
 \[
 p(T) = \frac 1{2}[ 1 - \tanh 1/T].
 \]
  \end{proposition}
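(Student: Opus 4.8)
\medskip
\noindent\emph{Proof plan.} The idea is to combine the abstract decomposition $\mu(\cdot)=\int\mu(d\sigma)\,[\langle\cdot\rangle^{T}_{\sigma}]$ recalled at the beginning of this subsection with an explicit description of a configuration drawn from $\mu^{T}_{\varnothing}$, read off along the tree. The key observation, available precisely because $\mathcal{T}_{k}$ has no cycles, is that the map $\sigma\mapsto(\omega,D)$ with $\omega=\sigma_{0}$ and $D=\{\langle x,y\rangle:\sigma_{x}\neq\sigma_{y}\}$ is a bijection from $\{\pm1\}^{V}$ onto $\{\pm1\}\times\{\text{bond subsets of }\mathcal{T}_{k}\}$, with inverse $\sigma=\sigma^{D,\omega}$ in the notation of \eqref{dspin}. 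Hence the whole content of the Proposition reduces to a distributional identity: under $\mu^{T}_{\varnothing}$, the pair $(\omega,D)$ has $\omega$ uniform on $\{\pm1\}$, $D$ an i.i.d.\ bond percolation of density $p(T)=\tfrac12(1-\tanh 1/T)$, and the two are independent.

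To prove this identity one can argue in two ways. The quickest uses only that $\mu^{T}_{\varnothing}$, being Gibbs for a nearest-neighbour interaction, is a Markov field on $\mathcal{T}_{k}$: spin-flip symmetry gives $\langle\sigma_{0}\rangle^{T}_{\varnothing}=0$, and for neighbouring sites $\langle\sigma_{x}\sigma_{y}\rangle^{T}_{\varnothing}=\tanh 1/T$, so $\mu^{T}_{\varnothing}(\sigma_{x}\neq\sigma_{y})=\tfrac12(1-\tanh 1/T)=p(T)$; rooting the tree at $0$ and applying the Markov property, one samples $\mu^{T}_{\varnothing}$ by choosing $\sigma_{0}$ uniformly and then letting each child disagree with its parent, independently over the bonds, with probability $p(T)$ — i.e.\ $D$ is i.i.d.-$p(T)$, independent of $\omega=\sigma_{0}$, and $\sigma=\sigma^{D,\sigma_{0}}$. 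Alternatively one reads the identity off \eqref{01}: $\mu^{ES}(\cdot\mid n)$ is supported on spin configurations constant on each open cluster, every closed bond of a tree separates the two clusters it joins, and contracting the open clusters turns $\mathcal{T}_{k}$ into a ``cluster tree'' whose vertex labels (the cluster spins) are i.i.d.\ uniform; since for any tree the edge-products of i.i.d.\ Rademacher vertex labels are themselves i.i.d.\ Rademacher and independent of the label at the root, integrating over $n$ (each bond open with probability $\tanh 1/T$) gives exactly the claimed joint law of $(\omega,D)$.

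It then remains to feed $\mu^{T}_{\varnothing}(d\sigma)=\mathbb{E}_{p(T)}\big[\tfrac12(\delta_{\sigma^{D,+}}+\delta_{\sigma^{D,-}})\big](d\sigma)$ into $\mu^{T}_{\varnothing}(\cdot)=\int\mu^{T}_{\varnothing}(d\sigma)\,\langle\cdot\rangle^{T}_{\sigma}$, which yields \eqref{es}. The only point that needs care is that $\langle\cdot\rangle^{T}_{\sigma^{D,\pm}}$ — the limit of finite-volume ferromagnetic states with these (non-monotone) boundary conditions — be well defined for $\mathbb{E}_{p(T)}$-almost every $D$; this is provided by the almost-sure part of the abstract decomposition, because the law of $\sigma^{D,\pm}$ with $D$ i.i.d.-$p(T)$ is $\mu^{T}_{\varnothing}$ conditioned on $\sigma_{0}=\pm1$, which is absolutely continuous with respect to $\mu^{T}_{\varnothing}$. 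I expect the substantive step to be the second one — specifically the mutual independence of the disagreement bits, which is also the only place the tree structure (absence of cycles) is genuinely used; the rest is bookkeeping and a substitution.
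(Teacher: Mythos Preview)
Your proof is correct and, at the structural level, matches the paper's: both feed an explicit description of a $\mu^{T}_{\varnothing}$-typical configuration into the abstract decomposition $\mu(\cdot)=\int\mu(d\sigma)\,\langle\cdot\rangle^{T}_{\sigma}$, reducing the claim to the distributional identity $\mu^{T}_{\varnothing}(d\sigma)=\mathbb{E}_{p(T)}\bigl[\tfrac12(\delta_{\sigma^{D,+}}+\delta_{\sigma^{D,-}})\bigr]$. Your second derivation of that identity --- via Edwards--Sokal, noting that on a tree each closed bond separates two clusters whose independent Rademacher labels turn it into a disagreement bond with probability $1/2$, so that the disagreement set is i.i.d.\ with density $q(T)/2=p(T)$ --- is exactly the paper's argument, only spelled out more explicitly than the paper's rather terse ``same distribution as \dots\ with twice as large probability of closed bonds''. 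Your first derivation, using directly that the free state on $\mathcal{T}_{k}$ is a tree-indexed Markov chain with edge correlation $\tanh(1/T)$, is not in the paper; it is more elementary (it bypasses the random-cluster machinery entirely) and makes transparent that the independence of the disagreement bits is a consequence of the Markov property plus the absence of cycles. The paper, for its part, stays within the Edwards--Sokal framework it has already introduced, which has the advantage of making the halving $p=q/2$ visibly come from the extra coin flip on cluster signs.
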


  \begin{proof}
  We apply the Edwards-Sokal representation \eqref{01} of the Ising model. Start by noting that
  \[
 \left\langle \cdot\right\rangle^T_{\varnothing} = \mathbb{E}_{q(T)}\left(
 \mu^{ES}\left(  \cdot{\Huge |}n\right)  \right)  =\mathbb{E}_{q(T)}\int  \left\langle
\cdot\right\rangle^T_{\sigma}\mu^{ES}(d\sigma{\Huge |}n),
\]

Consider now for a given bond collection $n$ the atomic measure
 $\mu^{\pm}\left(  d\sigma{\Huge |}n\right)  =\frac{1}%
{2}\left(  \delta_{\sigma^{n+}}+\delta_{\sigma^{n-}}\right)$, which to $n$ assigns two configurations defined by the relations  \eqref{dspin}, each with probability
$\frac{1}{2}.$  In other words, fixing the spin of the origin and fixing $n$ determines all the other spin values, where in particular neighboring connected components of open bonds alternate their spin.
However the resulting spin configuration would have the same distribution as in the Edwards-Sokal representation with twice as large probability of closed bonds: with $p(T) = q(T)/2$,
\[
\mathbb{E}_{q(T)}[\mu^{ES}\left(  d\sigma{\Huge |}n\right)]    = \mathbb{E}_{p(T)}[\mu^{\pm
}\left(  d\sigma{\Huge |}n\right)],
\]
Bringing all that together we conclude that on the tree
\eqref{01} reduces to the formula \eqref{es}.
 \end{proof}

Note also that the states $\langle \cdot\rangle^T_{\sigma^{D,+}}$, $\langle \cdot\rangle^T_{\sigma^{D,-}}$
can be obtained as thermodynamic limits of the finite-volume Gibbs states with the boundary conditions $\sigma^{D,+}$, $\sigma^{D,-}$.
These limits exist for $ \langle\cdot\rangle^T_{\varnothing}$-almost all boundary conditions $\sigma^{D,+}$, $\sigma^{D,-}$.

  To see that this decomposition \eqref{es} is non-trivial  for $T<T_\text{SG}$ it suffices to show that when $T< T_\text{SG}$,
$\langle \cdot\rangle^T_{\sigma^{D,+}} \neq  \langle \cdot\rangle^T_{\sigma^{D,-}}$ for $\mathbb{E}_{p(T)}$-typical sets $D$.
That follows from the relations
\begin{equation}\label{posor}
\mathbb{E}_{p(T)}[ \langle \cdot\rangle^T_{\sigma^{D,+}}] > 0 > \mathbb{E}_{p(T)}[ \langle \cdot\rangle^T_{\sigma^{D,-}}],
\end{equation}
which are a special case of the Theorem 1.1 of \cite{EKPS}.

Of course, the states $\langle\cdot\rangle_{\sigma^{D,+}}^{T}$ may coincide
for different sets $D$ -- this is the case at high temperature. In the
next subsection we will show that at low temperatures there is a continuum of
different states $\langle\cdot\rangle_{\sigma^{D,+}}^{T}$ as we vary over the
sets $D$, see also \cite{mani}. There we have shown that if the set $D$
consists of bonds sufficiently separated from one another, then the
configuration $\sigma^{D,+}$ is a stable ground state. Of course, for our
random configuration $D$ this is not the case; for example we will see in $D$
pairs of bonds sharing a vertex, which will happen with positive density.
However, \textit{quite often} we will see just isolated bonds, well separated,
once $p(T)$ is small enough, see the next subsection for more details.

\textbf{Remark.} At all temperatures $T$ the free-state two-point function
$\langle\sigma_{0}\sigma_{x}\rangle_{\varnothing}^{T}\rightarrow0$ goes to zero
when $x$ goes to infinity, yet for $T<T_{\text{SG}}$ the free state is not
extreme. In particular, the magnetization in increasingly large volumes has a
variance that does not go to zero with the size of the volume. The interesting
tail-observable which shows that the free state is not extreme is related to
the Edwards-Anderson parameter. Here is the simplest version: take the
magnetization  {at the origin,}
\[
M(\tau)=\langle\sigma_{0}\rangle_{\tau}^{T}%
\]
in the infinite-volume Gibbs distribution with boundary condition $\tau$; that
$\tau$ is drawn from the free state at temperature $T$. For $T<T_{\text{SG}}$
the random variable $M(\tau)$ has a non-trivial distribution.

\subsection{At low temperatures \texorpdfstring{$T$}{Lg} the states \texorpdfstring{$\langle\cdot\rangle
_{\sigma^{D,+}}^{T}$}{Lg} are mutually singular.}

\begin{theorem}
Pick two independent configurations $\sigma^{D_1,+}$, $\sigma^{D_2,+}$ from the distribution
 $\langle\cdot\rangle^T_{\varnothing}$. Then the two limiting states \texorpdfstring{$\langle\cdot\rangle
_{\sigma^{D_1,+}}^{T}$}{Lg}, \texorpdfstring{$\langle\cdot\rangle
_{\sigma^{D_2,+}}^{T}$}{Lg} exist and are mutually singular a.s. with respect to
$\langle\cdot\rangle^T_{\varnothing}\times \langle\cdot\rangle^T_{\varnothing}$,
provided the temperature $T$ is low enough.
\end{theorem}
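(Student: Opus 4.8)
Throughout write $\nu_D:=\langle\cdot\rangle^T_{\sigma^{D,+}}$, let $L_n$ be the set of vertices at distance $n$ from the root, and put $\theta:=\tanh(1/T)$, so that $1-2p(T)=\theta$ and the condition $T<T_{\operatorname{SG}}$ is exactly $k\theta^2>1$. That $\nu_{D_1}$ and $\nu_{D_2}$ exist for $\langle\cdot\rangle^T_\varnothing$-a.e.\ boundary condition has been recalled above, so the goal is to produce, for a.e.\ $(D_1,D_2)$, a tail event with full $\nu_{D_1}$-measure and zero $\nu_{D_2}$-measure. The observable I would work with is the empirical overlap of a configuration with the reference pattern $\sigma^{D_1,+}$ along the $n$-th level,
\[
X_n(\eta):=\frac{1}{|L_n|}\,\#\{v\in L_n:\ \eta_v\neq\sigma^{D_1,+}_v\}\,,
\]
and the plan is to show that $X_n\to\tfrac12(1-q_0)$ holds $\nu_{D_1}$-a.s., while $X_n\to\tfrac12$ holds $\nu_{D_2}$-a.s., where $q_0:=\mathbb E_{p(T)}[\langle\sigma_0\rangle^T_{\sigma^{D,+}}]$ is positive by \eqref{posor}. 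Since $\limsup_nX_n$ depends only on $\eta$ restricted to $\bigcup_{m\ge N}L_m$ for every $N$, it is tail measurable, so $E:=\{\eta:\limsup_nX_n(\eta)<\tfrac12\}$ is then a tail event separating $\nu_{D_1}$ from $\nu_{D_2}$, which gives mutual singularity.

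The first step is a concentration estimate for $X_n$, valid at every temperature and uniform in the state. For \emph{any} Gibbs state $\nu$ of an Ising model on ${\cal T}_k$ --- ferromagnetic or with $\pm1$ couplings, under any boundary condition --- the Markov property along the tree yields the classical bound $|\langle\eta_u\eta_v\rangle_\nu-\langle\eta_u\rangle_\nu\langle\eta_v\rangle_\nu|\le\theta^{d(u,v)}$: each edge of the path from $u$ to $v$ contributes a transmission coefficient of modulus at most $\tanh(1/T)$, whatever the effective field produced by the rest of the tree. Since $L_n$ contains at most $k^{n+j}$ ordered pairs of vertices at distance $2j$, this gives $\operatorname{Var}_\nu(X_n)=O(\max(k^{-n},\theta^{2n}))$, a summable sequence, so Chebyshev and Borel--Cantelli give $X_n-\mathbb E_\nu[X_n]\to0$ $\nu$-a.s. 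It therefore remains only to identify the limits of the deterministic sequences $\mathbb E_{\nu_{D_1}}[X_n]$ and $\mathbb E_{\nu_{D_2}}[X_n]$.

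For this I would use $\nu(\eta_v\neq s_v)=\tfrac12(1-s_v\langle\eta_v\rangle_\nu)$ together with the gauge equivalence recalled in the Introduction: writing $\langle\eta_v\rangle_{\nu_{D_i}}=\sigma^{D_i,+}_v\,m^{(i)}_v$, where $m^{(i)}_v\in[-1,1]$ is the magnetization at $v$ in the random-bond model with couplings $-1$ on $D_i$ (and $+1$ elsewhere) and $+$ boundary condition, one obtains
\[
\mathbb E_{\nu_{D_2}}[X_n]=\tfrac12-\tfrac12\,\frac1{|L_n|}\sum_{v\in L_n}\sigma^{D_1,+}_v\sigma^{D_2,+}_v\,m^{(2)}_v\,,\qquad
\mathbb E_{\nu_{D_1}}[X_n]=\tfrac12-\tfrac12 Z_n(D_1)\,,\ \ Z_n(D_1):=\frac1{|L_n|}\sum_{v\in L_n}m^{(1)}_v\,.
\]
In the cross term $D_1$ is independent of $D_2$ and of the $m^{(2)}_v$, and for $v\in L_n$ one has $\mathbb E_{D_1}[\sigma^{D_1,+}_v]=\theta^n$ and $\mathbb E_{D_1}[\sigma^{D_1,+}_u\sigma^{D_1,+}_v]=\theta^{d(u,v)}$ (the symmetric difference of the two root-paths being exactly the path from $u$ to $v$); hence the $D_1$-conditional mean of the sum above is $O(\theta^n)$ and its $D_1$-conditional variance is again $O(\max(k^{-n},\theta^{2n}))$, summable, so that sum $\to0$ for a.e.\ $(D_1,D_2)$. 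With the concentration estimate this yields $X_n\to\tfrac12$ $\nu_{D_2}$-a.s.

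The self term $Z_n(D_1)$ is where the low-temperature hypothesis is used, and I expect it to be the main obstacle. By translation invariance of the law of $D_1$ one has $\mathbb E_{p(T)}[Z_n(D_1)]=\mathbb E_{p(T)}[\langle\sigma_0\rangle^T_{\sigma^{D,+}}]=q_0>0$ for every $T<T_{\operatorname{SG}}$, but one needs the a.s.\ statement $Z_n(D_1)\to q_0$ for $\mathbb E_{p(T)}$-a.e.\ $D_1$. In the gauge picture this is a self-averaging statement for the magnetization profile of the random-bond Ising model on ${\cal T}_k$ with $+$ boundary condition, and the natural route is a low-temperature contour (Peierls/polymer) argument, uniform in the disorder. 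For $p(T)$ small the set $D$ is, as discussed in \cite{mani}, typically a union of well-separated isolated bonds, for which $\sigma^{D,+}$ is a stable ground state of the gauged model; then $\langle\sigma_v\rangle^T_{\sigma^{D,+}}$ should be close to $\sigma^{D,+}_v$ outside a set of $v$ of small density produced by the rare clustered portions of $D$, and that density should concentrate because clusters are (exponentially) local in $D$ while the deviation from the ground state is a genuine Peierls effect. The hard part is the control near those clustered portions, where the ground state of the gauged model is degenerate or has ``floppy'' spins, and showing that their contribution to $\liminf_n Z_n(D_1)$ is negligible a.s.; everything else in the argument is soft. Granting this, $X_n\to\tfrac12(1-q_0)<\tfrac12$ $\nu_{D_1}$-a.s.; together with $X_n\to\tfrac12$ $\nu_{D_2}$-a.s., the tail event $E$ satisfies $\nu_{D_1}(E)=1$ and $\nu_{D_2}(E)=0$ for a.e.\ $(D_1,D_2)$ at low temperature, i.e.\ $\nu_{D_1}\perp\nu_{D_2}$.
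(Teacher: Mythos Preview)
Your route is genuinely different from the paper's. The paper does not build a level-wise overlap observable at all; instead it works at $T=0$ with \emph{edge} observables. It calls a bond $b=(x,y)$ ``$r$-frustrated'' in the ground state $\langle\cdot\rangle^{T=0,B_N(b)}_{\sigma^{D,+}}$ if $\sigma_x\sigma_y=-1$ there while all bonds within distance $r$ of $b$ are unfrustrated, for all large $N$. The main computation is a Peierls/contour estimate: if the frustrated set near $b\in D$ is not just $\{b\}$, there must be a contour $\gamma$ crossing $\ell\ge k+1$ bonds with $|\gamma\cap D|\ge \ell/2$; counting trees and using $\mathbf{Pr}\le 2^\ell k^{2\ell/(k-1)}p^{\ell/2}$ shows this is exponentially unlikely for small $p$. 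Hence for two independent $D_1,D_2$ the (essentially $D_i$-determined) sets of $r$-frustrated bonds are disjoint on arbitrarily large regions, and one reads off which $D_i$ produced a sample $\eta$ by looking at which bonds are frustrated --- this is the tail information yielding mutual singularity. So the paper's observable is local and structural (recover $D$ from the frustration pattern), while yours is global and analytic (the overlap $X_n$ converges to different constants).

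Your concentration step and the cross-term analysis are sound: the tree Markov property does give an exponential covariance bound of the type you use, and the $D_1$-averaging of $\sigma^{D_1,+}_v\sigma^{D_2,+}_vm^{(2)}_v$ is clean because $D_1\perp (D_2,m^{(2)})$. The real gap --- which you correctly flag with ``Granting this'' --- is the self-term: you need $\liminf_n Z_n(D_1)>0$ for $\mathbb E_{p(T)}$-a.e.\ $D_1$ (not the stronger $Z_n\to q_0$, incidentally; any a.s.\ positive $\liminf$ suffices for your tail event $E$). That is a self-averaging statement for the random-bond magnetization profile on the tree, and it is \emph{not} soft: the $m^{(1)}_v$ for $v\in L_n$ share the randomness of $D_1$ along the paths to the root, so a naive LLN does not apply, and you really do need the Peierls control you sketch. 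The paper's contour count is exactly this control, but packaged differently (frustrated bonds rather than averaged magnetizations); if you want to complete your argument you will end up reproving essentially the same estimate --- that outside a density-$o(1)$ set of vertices determined by the rare clustered parts of $D$, the gauged magnetization $m^{(1)}_v$ is close to $+1$. What your approach buys is a single explicit tail functional tied transparently to the Edwards--Anderson parameter; what the paper's approach buys is that the contour estimate is done once, directly on the ground-state frustration picture, without needing a separate self-averaging lemma.
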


\textbf{Proof.} We denote by $D$ a random
configuration of bonds in $\mathcal{T}_{k},$ each bond picked independently
with probability $p,$ with the parameter $p$ being fixed and small enough. We
will study the Gibbs states $\langle\cdot\rangle_{\sigma^{D,+}}^{T}$ at low
temperatures, with the goal of showing their a.s. mutual singularity.

Let $b=\left(  x,y\right)  \in\mathcal{T}_{k}$ be a bond, and $B_{N}\left(
b\right)  \subset\mathcal{T}_{k}$ be a ball of radius $N$ centered at $b.$
Consider the ground state (=zero temperature Gibbs) measure $\langle
\cdot\rangle_{\sigma^{D,+}}^{T=0,B_{N}\left(  b\right)  }$ in the box
$B_{N}\left(  b\right)  $ with boundary condition $\sigma^{D,+}.$ We call the
bond $b$ frustrated in the state $\langle\cdot\rangle_{\sigma^{D,+}%
}^{T=0,B_{N}\left(  b\right)  },$ if the event $\sigma_{x}\sigma_{y}=-1$
happens with probability one in the state $\langle\cdot\rangle_{\sigma^{D,+}%
}^{T=0,B_{N}\left(  b\right)  },$ for all $N\ $large enough. We call the bond
$b$ to be $r$-strongly-frustrated (or just $r$-frustrated) in the state
$\langle\cdot\rangle_{\sigma^{D,+}}^{T=0,B_{N}\left(  b\right)  },$ if the
event $\sigma_{x}\sigma_{y}=-1$ happens with probability one in the state
$\langle\cdot\rangle_{\sigma^{D,+}}^{T=0,B_{N}\left(  b\right)  },$ as well as
the events $\sigma_{x^{\prime}}\sigma_{y^{\prime}}=1$ for all bonds
$b^{\prime}=\left(  x^{\prime},y^{\prime}\right)  $ within distance $r$ from
the bond $b,$ again for all $N\ $large enough.

For example, the above will hold if $b\in D,$ while $D$ is a deterministic
configuration composed from \textit{isolated} bonds which are sufficiently far
away from each other, see \cite{mani}, \cite{lob}. What we want to show now is
that if $D$ is \textit{random,} and $b\in D,$ then it is very likely that $b$
is $r$-frustrated, provided $p$ is small enough (depending on $r$). Once we
show that, our claim about mutual singularity will be proven, since for two
independent configurations $D^{\prime},D^{\prime\prime}$ we will be able to
find arbitrarily large \textit{disjoint} sets of $r$-frustrated bonds.

So let $D$ be random, and $b\in D.$ Our first observation is that the
probability of $D$ having other bonds at distance $2r$ from $b$ is quite
small, provided $p$ is small enough. That would be the end of the story if the
configuration $\sigma^{D,+}$ would be a ground state configuration. Indeed, in
that case the state $\langle\cdot\rangle_{\sigma^{D,+}}^{T,B_{N}\left(
b\right)  }$ would be a small perturbation of the configuration $\sigma^{D,+}$
once $T$ is low, uniformly in $N.$

However, the configuration $\sigma^{D,+}$ is not a ground state configuration
a.s., so the state $\langle\cdot\rangle_{\sigma^{D,+}}^{T=0,B_{2r}\left(
b\right)  }$ might have other frustrated bonds in $B_{2r}\left(  b\right)  ;$
moreover, it even can happen that $b$ itself is \textit{not }frustrated in
this state. We will show now that all this is highly unlikely, once $p$ is
small enough.

So suppose the set of frustrated bonds of the state $\langle\cdot
\rangle_{\sigma^{D,+}}^{T=0,B_{2r}\left(  b\right)  }$ is not the singleton
$\left\{  b\right\}  .$ That can happen iff there is a contour $\gamma,$
$\left[  \gamma\cup\mathrm{Int}\left(  \gamma\right)  \right]  \cap
B_{2r}\left(  b\right)  \neq\varnothing,$ crossing certain number $\ell\geq k+1$
of bonds of $\mathcal{T}_{k},$ such that $\left\vert \gamma\cap D\right\vert
\geq\frac{\ell}{2}.$ Consider the set $T_{\gamma}$ of the bonds of $\mathcal{T}%
_{k}$ which are inside $\gamma,$ and the set $L_{\gamma}$ of bonds of
$\mathcal{T}_{k}$ the contour $\gamma$ is intersecting. Together they form a
finite tree $S_{\gamma},$ which has the same branching number $k$ as our
infinite tree $\mathcal{T}_{k}.$ The set $L_{\gamma}$ is the set of all leaves
of the tree $S_{\gamma}.$ Let $n_{\gamma}$ be the number of nodes of
$S_{\gamma}$ inside $\gamma,$ and $\bar{L}_{\gamma}\subset L_{\gamma}$ be the
intersection $L_{\gamma}\cap D.$ So we have $\left\vert T_{\gamma}\right\vert
=n_{\gamma}-1,$ $\left\vert L_{\gamma}\right\vert =\ell,$ $\left\vert \bar
{L}_{\gamma}\right\vert \geq\frac{\ell}{2},$ and $\left\vert L_{\gamma
}\right\vert =1+n_{\gamma}\left(  k-1\right)  .$

Evidently, the probability of seeing such a tree%
\[
\mathbf{\Pr}\left(  S_{\gamma},L_{\gamma},\bar{L}_{\gamma}\right)  \leq
p^{\ell/2},
\]
so%
\[
\mathbf{\Pr}\left(  S_{\gamma},L_{\gamma}\right)  \leq2^{\ell}p^{\ell/2}.
\]
The number of trees $S$ with $n$ inner nodes does not exceed $k^{2n}.$ Thus
the probability that a given bond $b_{1}$ is a leaf of such a tree with $\ell$
leaves is bounded from above by%
\[
2^{\ell}k^{2\ell/\left(  k-1\right)  }p^{\ell/2},
\]
which is exponentially small in $\ell$ for $p$ small enough. So we can apply the standard Peierls
argument to shows that the probability of the event%
\[
\{\text{there is a contour }\gamma,\text{ such that }\gamma\cap B_{2r}\left(
b\right)  \neq\varnothing\text{ or }B_{2r}\left(  b\right)  \subset
\mathrm{Int}\left(  \gamma\right)  \}
\]
goes to zero as $p\rightarrow0,$ which ends the proof.

To conclude, we point out for clarity that the ``Gibbs ground'' states
$\lim_{T\rightarrow0}\langle\cdot\rangle_{\sigma^{D,+}}^{T}$ constructed from
the $p$-random bond configurations $D$, are typically nontrivial measures,
i.e. they have infinite supports, a.s. This is in contrast with the ground
states constructed in \cite{mani}, where the corresponding Gibbs ground states
are supported by a single ground state configuration. However, as is explained
above, a vast majority of the frustrated bonds under typical ground state
$\langle\cdot\rangle_{\sigma^{D,+}}^{T=0}$ are isolated bonds, once $p$ is
small. This fact is the source for the decomposition \eqref{es} to have a
continuum of extremal components.

\section{Double--temperature Ising model}
{We already mentioned in the introduction that the phase diagram of the ferromagnetic Ising model is essentially determined by the critical temperature $T_{\operatorname{cr}} = 1 / \operatorname{arctanh} (1 / k)$, and the spin-glass temperature $T_{\operatorname{SG}} = 1 /
    \operatorname{arctanh} \left( 1 / \sqrt{k} \right)$.  A clarification of the situation can however be obtained by enlarging} the objects in \eqref{es} into a two-temperature setting.  We consider two-temperature states, with $T_2$ the bulk temperature and $T_1$ the boundary temperature,
\[
\nu(T_1,T_2) := \langle \cdot \rangle^{T_2}_{\sigma^{D(T_1),+}}
\]
which 
is the infinite-volume Gibbs distribution  at temperature $T_2$  with the boundary condition  taken to be the spin configuration \eqref{dspin} where $D$ is drawn from $\mathbb{E}_{p(T_1)}$, the Bernoulli bond percolation process with parameter $1-p(T_1)$.
Of course, one may wonder whether the thermodynamic limits  $\langle \cdot \rangle^{T_2}_{\sigma^{D(T_1),+}}$ exist.
We are not going to prove it; what is said below holds for any limit point of that family. Note that \eqref{es} contains these states $\nu(T,T)$ with $T_1=T_2=T$ -- and that is why it is useful to speak about the temperature $T_1$, but of course the relevant parameter is the density $p(T_1)$.  The following is therefore presented in the $(p,T)$-plane, which is also the setting of \cite{CCCST}.\\

Consider the curve
\[
  \mathcal{T}_{\operatorname{SG}} (p) = \max\{\frac{1}{\operatorname{arctanh} [\frac{1}{k (1 - 2
  p)}]},0\}.
\]
Note that $\mathcal{T}_{\operatorname{SG}} (p)> 0$ when $k(1-2p) >1$.

\begin{figure}[!h]
\centering
\includegraphics[width=8cm]{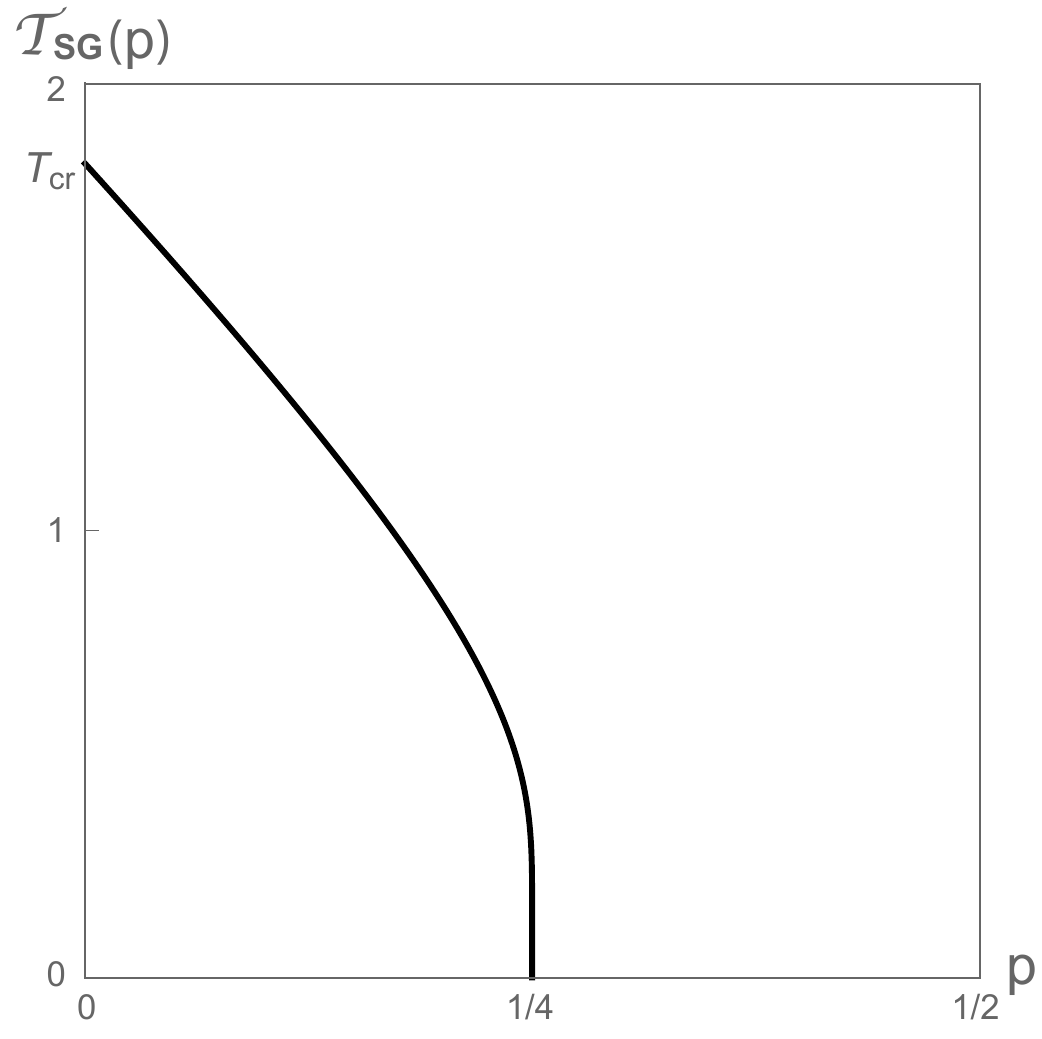}
\caption{The curve $\mathcal{T}_{\operatorname{SG}} (p)$.}
\label{twoTplot}
\end{figure}

\begin{proposition}
\label{magnetisation}
  For any positive temperature $T > 0$ and  parameter $0 \leq p \leq 1/2$,
  \begin{enumerate}
    \item When $T \geq \mathcal{T}_{\operatorname{SG}} (p)$, the expected local
    magnetization of the random Gibbs states
   $
  \langle \cdot \rangle^T_{\sigma^{D, \omega}}
$
 vanishes,
    \[
     \mathbb{E}_p
      (\langle \sigma_x \rangle^T_{\sigma^{D, \pm}}) = 0
    \]
    \item When $T <\mathcal{T}_{\operatorname{SG}} (p)$,
    \[
     \mathbb{E}_p (\langle \sigma_x \rangle^T_{\sigma^{D, +}}) = - \mathbb{E}_p (\langle \sigma_x \rangle^T_{\sigma^{D, -}}) > 0 \]
      \end{enumerate}
\end{proposition}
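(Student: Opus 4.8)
Write $\theta=\tanh(1/T)$, $\theta_1=1-2p$, let $B_n$ be the ball of radius $n$ about the root $0$, and set $M_n=\mathbb E_p\big(\langle\sigma_0\rangle^{T,B_n}_{\sigma^{D,+}}\big)$. By vertex-transitivity one may take $x=0$, and the expected magnetization of any limit point in the statement is a limit point of $(M_n)_n$. Global spin flip together with $\sigma^{D,-}=-\sigma^{D,+}$ gives $\mathbb E_p(\langle\sigma_x\rangle^T_{\sigma^{D,-}})=-\mathbb E_p(\langle\sigma_x\rangle^T_{\sigma^{D,+}})$, so it suffices to treat $\sigma^{D,+}$. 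The finite-volume magnetization obeys the cavity recursion $\langle\sigma_0\rangle^{T,B_n}_{\sigma^{D,+}}=\tanh\big(\sum_i\operatorname{arctanh}(\theta J_i m_i)\big)$, where $J_i=-1$ iff the $i$-th edge at the root lies in $D$ and $m_i$ is the corresponding magnetization in the subtree below child $i$; by oddness $\mathbb E_{J_i}[\operatorname{arctanh}(\theta J_i m)]=(1-2p)\operatorname{arctanh}(\theta m)$, so the linearised recursion has gain exactly $k\theta(1-2p)$, which singles out $\mathcal T_{\operatorname{SG}}(p)$ — defined by $k\theta(1-2p)=1$ — as the threshold. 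Two equivalent pictures organise the argument. After the gauge transformation $\sigma\mapsto\sigma\,\sigma^{D,+}$ the state $\langle\cdot\rangle^T_{\sigma^{D,+}}$ becomes the $(+)$-state of the random-bond model \eqref{RandCoup} with bias $p$, so $M_n$ is a finite-volume expected magnetization of that model on $\mathcal T_k$ (the setting of \cite{CCCST}). And since on a tree the $p$-Bernoulli set $D$ is the domain-wall set of a free Ising state at the temperature $T_1$ with $\theta_1=1-2p$, one obtains $M_n=\langle\,\phi_n^{(1)}(\sigma_{L_n})\,\psi_n^{(2)}(\sigma_{L_n})\,\rangle^{T_1}_\varnothing$, where $L_n=\partial B_n$, $\psi_n^{(2)}(\eta)=\langle\sigma_0\rangle^{T,B_n}_\eta$ and $\phi_n^{(1)}(\eta)=\langle\sigma_0\rangle^{T_1,B_n}_\eta$ are the root magnetizations under boundary condition $\eta$ at temperatures $T$ and $T_1$, and $\langle\cdot\rangle^{T_1}_\varnothing$ is the free Ising state at temperature $T_1$.

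For statement 1, i.e. $T\ge\mathcal T_{\operatorname{SG}}(p)$ and so $k\theta(1-2p)\le1$: the bound $M_n\ge0$ is free from the last formula, since $\phi_n^{(1)},\psi_n^{(2)}$ are coordinatewise non-decreasing in $\eta$, the law of $\sigma_{L_n}$ under $\langle\cdot\rangle^{T_1}_\varnothing$ is positively associated (a marginal of an FKG measure), and $\langle\phi_n^{(1)}(\sigma_{L_n})\rangle^{T_1}_\varnothing=\langle\sigma_0\rangle^{T_1}_\varnothing=0$, so FKG gives $M_n\ge0$. The matching bound $M_n\to0$ is the serious step. When $\theta_1\le1/\sqrt k$ (i.e. $T_1\ge T_{\operatorname{SG}}$) it is short: Cauchy--Schwarz gives $|M_n|\le(\langle(\phi_n^{(1)})^2\rangle^{T_1}_\varnothing)^{1/2}$, which vanishes by extremality of the free $T_1$-state \cite{BRZ,Io,EKPS,MSW}. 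For smaller $p$ one runs the second-moment analysis of \cite{BRZ,EKPS}, now carrying the bias $1-2p$ through the recursion: one propagates $M_n$ together with the relevant second-moment quantities of the subtree magnetization, getting $M_n\le k\theta(1-2p)M_{n-1}+(\text{error})$ with an error that the recursion drives to zero, hence $M_n\to0$. The borderline $T=\mathcal T_{\operatorname{SG}}(p)$ is the critical case of the reconstruction problem, settled by the refined bounds of \cite{EKPS,MSW} or, alternatively, by continuity of $M_\infty$ in $T$ (a second-order transition; here one uses GKS monotonicity in the couplings of the gauge-transformed model).

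For statement 2, i.e. $T<\mathcal T_{\operatorname{SG}}(p)$ and $k\theta(1-2p)>1$: one must show $\liminf_n M_n>0$; with $M_n\ge0$ this yields $\mathbb E_p(\langle\sigma_x\rangle^T_{\sigma^{D,+}})>0$ and, by the spin-flip relation, the asserted identity with the $(-)$ quantity. This is the reconstruction-possible side: in the gauge picture it is ferromagnetic order of the biased random-bond model on $\mathcal T_k$, setting in exactly at $k\tanh(1/T)(1-2p)=1$ — the analysis of \cite{CCCST}, and a two-temperature form of Theorem 1.1 of \cite{EKPS}; on the curve $p=p(T)$ it is precisely \eqref{posor}. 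Concretely one proves the matching lower bound $M_n\ge k\theta(1-2p)M_{n-1}-(\text{error})$, with the error controlled by the second-moment recursion (a recursive-majority/census estimator of the root providing the explicit reconstruction), so that $(M_n)$ stays bounded away from $0$.

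The main obstacle is this biased (two-temperature) second-moment estimate near the curve $\mathcal T_{\operatorname{SG}}(p)$. Naive subadditivity of $\tanh$ is of no use — $\tanh$ is superadditive on mixed-sign arguments, which genuinely occur along the recursion — and, more seriously, $\langle\sigma_0\rangle^{T,B_n}_{\sigma^{D,+}}$ itself does not tend to $0$: as soon as $k\tanh^2(1/T)>1$ it fluctuates with an order-one variance even for $T\ge\mathcal T_{\operatorname{SG}}(p)$, where only its mean vanishes. So the first and second moments must be propagated jointly, exactly as in \cite{BRZ,EKPS}; the critical-curve case of statement 1 is a secondary technical point.
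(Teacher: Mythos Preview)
Your approach and the paper's diverge substantially, though neither gives a self-contained proof. The paper does not really prove Proposition~\ref{magnetisation} on its own: it supplies a low-temperature Peierls/cluster-expansion argument near $T_1=T_2=0$ (which is actually aimed at the variance claim of Proposition~\ref{EA}) and, for the full $(p,T)$-range, simply invokes the recursion analysis of \cite{CCCST1,CCCST2}. Your proposal takes the recursion route head-on and organises it around the identity
\[
M_n=\big\langle \phi_n^{(1)}(\sigma_{L_n})\,\psi_n^{(2)}(\sigma_{L_n})\big\rangle^{T_1}_{\varnothing},
\]
obtained by recognising the $p$-Bernoulli set $D$ as the domain-wall set of a free Ising state at the temperature $T_1$ with $\tanh(1/T_1)=1-2p$. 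This representation is correct and is not in the paper; it immediately gives $M_n\ge 0$ via FKG, and in the regime $T_1\ge T_{\operatorname{SG}}$ it gives $M_n\to 0$ by Cauchy--Schwarz and extremality of the $T_1$-free state --- an argument cleaner than anything the paper offers. The gauge picture you describe (mapping to the $(+)$-state of the biased random-bond model \eqref{RandCoup}) is exactly the setting of \cite{CCCST,CCCST1,CCCST2}, so for the remaining region of statement~1 and for statement~2 you end up citing essentially the same bifurcation/recursion results the paper cites; your linearisation identifying the threshold $k\theta(1-2p)=1$ is the same computation that underlies the curve $\mathcal{T}_{\operatorname{SG}}(p)$.

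The honest gap you flag --- the biased second-moment recursion near the critical curve --- is real, and it is the same gap the paper fills by pointing to \cite{CCCST1,CCCST2}. One caution: your Cauchy--Schwarz step as written covers only $\theta_1\le 1/\sqrt{k}$, not the full region $k\theta\theta_1\le 1$; the symmetry of your correlation identity in $(T,T_1)$ does not directly give the other half, because $\langle(\psi_n^{(2)})^2\rangle^{T_1}_\varnothing$ is a genuinely two-temperature quantity and is not controlled by extremality of the free state at temperature $T$ alone. So outside $\{T_1\ge T_{\operatorname{SG}}\}$ you really do need the joint first/second-moment recursion you sketch, and the borderline case $T=\mathcal{T}_{\operatorname{SG}}(p)$ requires the refined estimates of \cite{EKPS} (your appeal to ``continuity in $T$'' via GKS is not quite enough on its own, since continuity of the finite-volume quantities does not by itself rule out a jump in the limit). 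In short: your sketch is sound and more informative for this proposition than the paper's, but both ultimately rest on the recursion analysis of \cite{CCCST1,CCCST2,EKPS}.
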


Let us now look at the second moment, $
  \mathbb{E}_p ([\langle \sigma_x \rangle^T_{\sigma^{D, +}}]^2)=\mathbb{E}_p ([\langle \sigma_x \rangle^T_{\sigma^{D, -}}]^2)$,
which is called the Edwards-Anderson (EA) parameter.

\begin{proposition}
\label{EA}
  For the random Gibbs state $\langle \cdot \rangle^T_{\sigma^{D, +}}$
  \begin{enumerate}
    \item If $T \geq T_{\operatorname{SG}}$ and $T \geq \mathcal{T}_{\operatorname{SG}} (p)$,
    then
    \[
      \mathbb{E}_p ([\langle \sigma_x \rangle^T_{\sigma^{D,
      +}}]^2) = 0
    \]
    \item Otherwise, for any {other} temperature $T > 0$:
    \[
      \mathbb{E}_p ([\langle \sigma_x \rangle^T_{\sigma^{D,
      +}}]^2) > 0
    \]
    Moreover,
      \[
      \textrm{Var}_p (\langle \sigma_x \rangle^T_{\sigma^{D,
      +}}) > 0,
    \]
    which means that the EA random variable $\langle \sigma_x \rangle^T_{\sigma^{D, +}}$ is non-degenerate.
  \end{enumerate}

\end{proposition}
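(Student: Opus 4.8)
The plan is to gauge away the random boundary configuration, reducing the Edwards--Anderson parameter to that of the $\pm J$ Ising model on $\mathcal{T}_k$ with plus boundary condition, and then to run the standard distributional recursion on the tree; the threshold structure should come out of two linearized multipliers. First I would perform, in any finite ball $B_N$, the change of variables $\sigma_v\mapsto\sigma_v\,\sigma^{D,+}_v$: since $\sigma^{D,+}_0=+1$ this turns the ferromagnet with boundary condition $\sigma^{D,+}$ into the Ising model with couplings $J'_b=\sigma^{D,+}_{b_-}\sigma^{D,+}_{b_+}$ (thus $J'_b=-1$ exactly when $b\in D$) and plus boundary condition, and leaves $[\langle\sigma_x\rangle^T_{\sigma^{D,+}}]^2$ unchanged; as $D$ is the Bernoulli$(p)$ bond process, $J'$ is precisely the random coupling \eqref{RandCoup}. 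Hence
\[
\mathbb{E}_p\big([\langle\sigma_x\rangle^T_{\sigma^{D,+}}]^2\big)\;=\;\mathbb{E}_{J'}\big([\langle\sigma_x\rangle^T_{+}]^2\big)\;=:\;q^{(p)}_{EA}(T),
\]
the EA parameter of the $\pm J$ model with plus boundary condition (for $p=1/2$ the one recalled in Section~2, for $p=0$ equal to $m^*(T)^2$).

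Next I would set up the recursion. With $\theta=\tanh(1/T)$ and $\psi(h)=\operatorname{arctanh}(\theta\tanh h)$ --- odd, with $\psi(h)=\theta h-\tfrac{\theta(1-\theta^2)}{3}h^3+O(h^5)$ and $|\psi(h)|\le\theta|h|$ by concavity --- the effective field $h_N$ at the origin of $B_N$ for the $\pm J$ model with plus boundary condition satisfies $h_N\overset{d}{=}\sum_{i=1}^k J_i\,\psi(h^{(i)}_{N-1})$, the $h^{(i)}_{N-1}$ being i.i.d.\ copies and the $J_i$ i.i.d.\ signs with $\mathbb{P}(J_i=-1)=p$, independent of them; and $q^{(p)}_{EA}(T)=\lim_N\mathbb{E}[\tanh^2 h_N]$ (a subsequential limit if the infinite-volume state is only a limit point). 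Taking moments,
\[
\mathbb{E}h_N=k(1-2p)\,\mathbb{E}\psi(h_{N-1}),\qquad \mathbb{E}h_N^2=k\,\mathbb{E}\psi(h_{N-1})^2+k(k-1)(1-2p)^2\big(\mathbb{E}\psi(h_{N-1})\big)^2,
\]
so linearizing $\psi$ gives multipliers $\mu_1=k(1-2p)\theta$ and $\mu_2=k\theta^2$, with $\mu_1>1\iff T<\mathcal{T}_{\operatorname{SG}}(p)$ and $\mu_2>1\iff T<T_{\operatorname{SG}}$.

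For part (1), $T\ge T_{\operatorname{SG}}$ and $T\ge\mathcal{T}_{\operatorname{SG}}(p)$ mean $\mu_1,\mu_2\le1$. Proposition~\ref{magnetisation}(1) already gives $\mathbb{E}[\tanh h_N]\to0$, hence $\mathbb{E}\psi(h_N)\to0$; inserting this into the second-moment recursion, whose homogeneous part contracts strictly when $\mu_2<1$ and, when $\mu_2=1$, because of the strictly negative cubic term of $\psi$ (i.e.\ non-reconstructibility of the binary symmetric channel at the Kesten--Stigum bound), forces $\mathbb{E}[\tanh^2 h_N]\to0$, so $q^{(p)}_{EA}(T)=0$. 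For part (2): if $T<\mathcal{T}_{\operatorname{SG}}(p)$, Proposition~\ref{magnetisation}(2) gives $\mathbb{E}_p(\langle\sigma_x\rangle^T_{\sigma^{D,+}})>0$ and hence $q^{(p)}_{EA}(T)\ge(\mathbb{E}_p\langle\sigma_x\rangle^T_{\sigma^{D,+}})^2>0$; if instead $T<T_{\operatorname{SG}}$, i.e.\ $\mu_2>1$ (a genuinely new case only when also $T\ge\mathcal{T}_{\operatorname{SG}}(p)$, which forces $p>\tfrac12(1-k^{-1/2})$), then $\mathbb{E}h_N$ may vanish and the bound $\liminf_N\mathbb{E}[\tanh^2 h_N]>0$ must come from a genuine second-moment estimate --- the reconstruction bound for the binary symmetric channel of parameter $\theta$ on the $k$-ary tree, which holds precisely when $k\theta^2>1$, via the recursive-reconstruction argument of \cite{EKPS,MSW} (consistently with the positivity of the $\pm J$ spin-glass EA parameter below $T_{\operatorname{SG}}$, \cite{CCST}). \emph{This last step I expect to be the main obstacle:} $\psi$ is strictly sublinear, so the linearized lower bound $\mathbb{E}h_N^2\ge\mu_2\,\mathbb{E}h_{N-1}^2$ cannot be iterated directly, and one must exhibit an explicit estimator of the root spin from the far boundary data whose correlation with the root stays positive uniformly in $N$.

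It remains to check the non-degeneracy $\textrm{Var}_p(\langle\sigma_x\rangle^T_{\sigma^{D,+}})>0$ in case (2). When $T\ge\mathcal{T}_{\operatorname{SG}}(p)$ --- which covers the part of (2) not treated above --- the mean vanishes, so $\textrm{Var}_p=q^{(p)}_{EA}(T)>0$. When $T<\mathcal{T}_{\operatorname{SG}}(p)$, conditioning on everything except the sign $J_1$ of one edge at $x$ yields
\[
\textrm{Var}_p\big(\langle\sigma_x\rangle^{T,B_N}_{\sigma^{D,+}}\big)\;\ge\;p(1-p)\,\mathbb{E}\Big[\big(\tanh(\psi(h^{(1)}_{N-1})+S)-\tanh(-\psi(h^{(1)}_{N-1})+S)\big)^2\Big],
\]
$S$ being the contribution of the other $k-1$ branches; since $\tanh$ is strictly increasing and $h^{(1)}_{N-1}$ (the root field of a subtree) is nonzero with probability bounded away from $0$ --- its mean magnetization $\mathbb{E}\tanh h^{(1)}_{N-1}$ tending to the strictly positive value attached to the regime $T<\mathcal{T}_{\operatorname{SG}}(p)$ --- the right-hand side stays bounded away from $0$, giving $\textrm{Var}_p(\langle\sigma_x\rangle^{T}_{\sigma^{D,+}})>0$.
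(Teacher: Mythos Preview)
Your approach is sound and is essentially the recursion route the paper itself invokes for the general case --- but the paper does not carry it out. What the paper actually \emph{proves} is only a fragment, near $T_1=T_2=0$: it uses the contour/frustrated-bond picture of Section~III.B to argue that two local bond patterns $D'\in\mathcal{D}_{b'}$ and $D''\in\mathcal{D}_{b''}$ (a single $D$-bond in a ball $B_R$, at different distances from the root) produce strictly different root magnetizations, via a cluster-expansion comparison, and concludes $\mathrm{Var}_p>0$. For all other temperatures the paper simply states that ``the proofs in the general case involve the recursion relations, and can be obtained from the results of \cite{CCCST1,CCCST2}'', and refers to the pictures.

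So your gauge reduction to the $\pm J$ model with plus boundary condition, the distributional cavity recursion, and the identification of the two linearized multipliers $\mu_1=k(1-2p)\theta$ and $\mu_2=k\theta^2$ are exactly the substance the paper outsources to \cite{CCCST1,CCCST2}; you are supplying the argument rather than citing it. Your handling of part~(1) and of the subcase $T<\mathcal{T}_{\operatorname{SG}}(p)$ of part~(2) is clean, and you correctly flag that the genuine work sits in the regime $T<T_{\operatorname{SG}}$, $T\ge\mathcal{T}_{\operatorname{SG}}(p)$, where the second-moment lower bound is the tree-reconstruction estimate of \cite{EKPS} (equivalently the Ioffe/BRZ result that the free state ceases to be extremal below $T_{\operatorname{SG}}$). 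Your variance argument --- conditioning on a single edge sign $J_1$ --- is both simpler and more general than the paper's low-temperature geometric comparison; it works for all $p\in(0,1)$ and all temperatures in region~(2), whereas the paper's explicit argument is confined to small $p$ and small $T$ and requires cluster expansion. The trade-off is that the paper's contour picture makes direct contact with the mutual-singularity result of Section~III.B and gives geometric intuition for \emph{why} different $D$'s yield different states, while your argument is purely analytic.
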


\begin{figure}[!h]
\centering
\includegraphics[width=9cm]{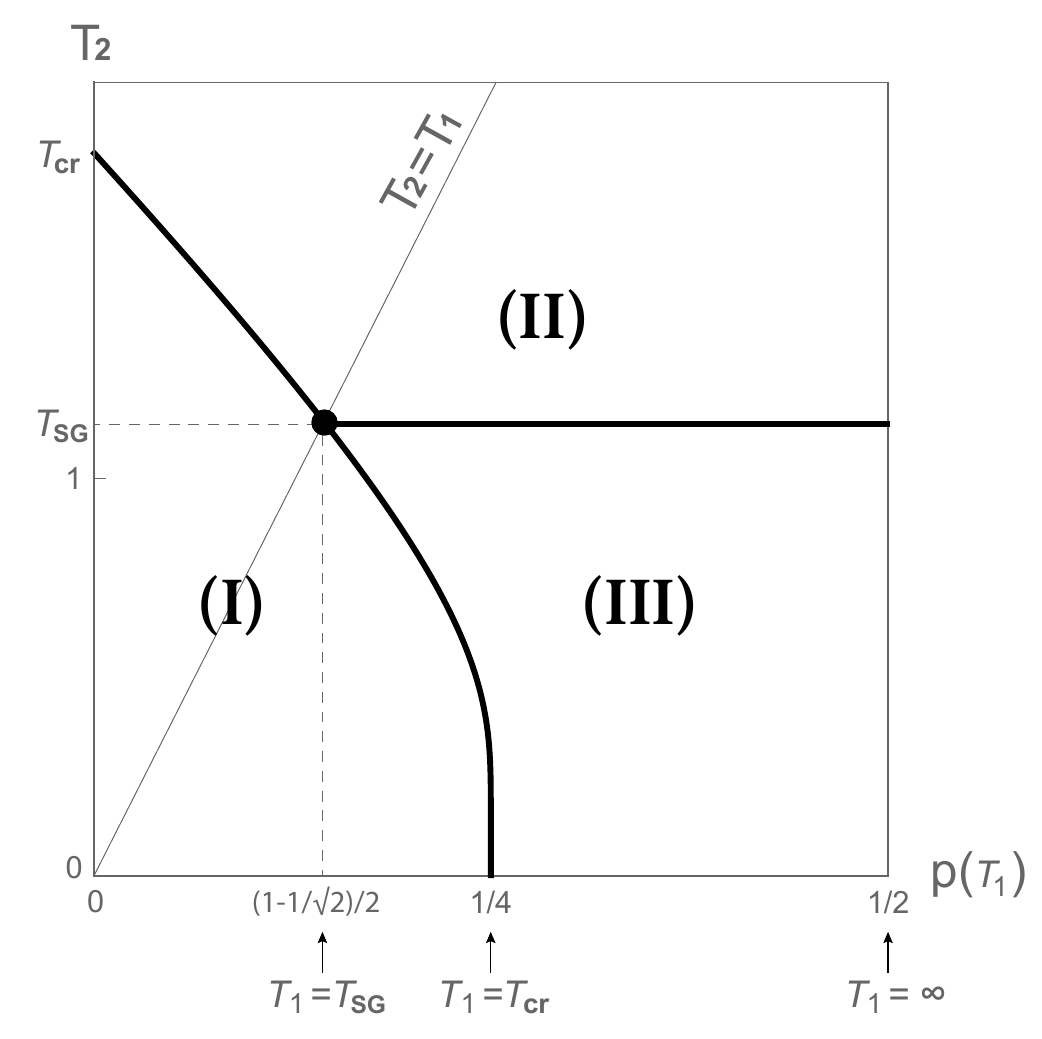}
\caption{Phase diagram. Phases I, II, III correspond to the behavior in Proposition IV.1, IV.2.  See also \cite{CCCST1,CCCST2} for a more qualitative discussion.}
\label{T2(T1)}
\end{figure}

From Fig.~\ref{T2(T1)} it is clear that the model shows an interesting and non-trivial behavior even on the line
 $$
T_1=\infty.
$$
That case is treated in \cite{B2}. \\
 To understand the nature of the spin-glass temperature, we remark that the composition $\mathcal{T}_{\operatorname{SG}} (p(T))$, which we abbreviate as
   \[
\mathcal{T}_{\operatorname{SG}} (T) := \mathcal{T}_{\operatorname{SG}} (p(T))        = \frac{1}{\operatorname{arctanh} \left( \frac{1}{k
      \tanh (1 / T)} \right)}
    \]
   is an involution: $\mathcal{T}_{\operatorname{SG}}(\mathcal{T}_{\operatorname{SG}} (T)) = T$. In particular,
  \[ \mathcal{T}_{\operatorname{SG}}(0) = T_{\operatorname{cr}}, \hspace{1em} \mathcal{T}_{\operatorname{SG}}(T_{\operatorname{cr}}) = 0,
     \hspace{1em} \mathcal{T}_{\operatorname{SG}}(T_{\operatorname{SG}}) = T_{\operatorname{SG}}
 .\]

Let us prove the last two Propositions in the vicinity of the point
$T_{1}=T_{2}=0.$ After the analysis of the section III.B and using its
technique, it is almost immediate.

Let us fix $x$ to be the root $\mathbf{0}$ of our tree. Informally speaking,
the magnetization at the root $\mathbf{0}$ in the state $\left\langle
\ast\right\rangle _{\sigma^{D,+}}^{T_{2}}$ is defined by the few bonds of the
(rare) bond configuration $D,$ which are in some proximity to $\mathbf{0.}$
Moreover, this magnetization will take different values when these few bonds
happen to be different. Since that happens with positive probability, our
claim follows.

To be more formal, let $B_{R}$ be the ball of radius $R$ centered at
$\mathbf{0.}$ Let $b$ be a bond in $B_{R}.$ Define the set $\mathcal{D}_{b}$
as the family of all realisations $D$ which has $b$ as the only bond in
$B_{R}.$ Clearly, the probability $\mathbf{\Pr}\left(  \mathcal{D}_{b}\right)
$ is positive for every value of the parameter $p\left(  T_{1}\right)  .$ Now,
let $b^{\prime},b^{\prime\prime}$ be two such bonds, with $\mathrm{dist}%
\left(  b^{\prime},\mathbf{0}\right)  >\mathrm{dist}\left(  b^{\prime\prime
},\mathbf{0}\right)  ,$ and let $D^{\prime}\in\mathcal{D}_{b^{\prime}},$
$D^{\prime\prime}\in\mathcal{D}_{b^{\prime\prime}}$ be two typical
configurations. Then, using the technique of the section III.B and a little of
cluster expansions, one sees that there exists a constant $c=c\left(
b^{\prime},b^{\prime\prime}\right)  ,$ such that%
\[
\left\langle \sigma_{\mathbf{0}}\right\rangle _{\sigma^{D^{\prime},+}}^{T_{2}%
}>c>\left\langle \sigma_{\mathbf{0}}\right\rangle _{\sigma^{D^{\prime\prime
},+}}^{T_{2}},
\]
provided both $T_{1}$ and $T_{2}$ are small enough. That proves the positivity
of the variance of the EA random variable $\left\langle \sigma_{\mathbf{0}%
}\right\rangle _{\sigma^{D,+}}^{T}$ (with randomness coming from $D$).

The proofs in general case involve the recursion relations, and can be
obtained from the results of the papers \cite{CCCST1}, \cite{CCCST2}. These results
are summarized graphically by the following pictures.

\newpage

\begin{figure}[h]
\centering
\includegraphics[width=10cm]{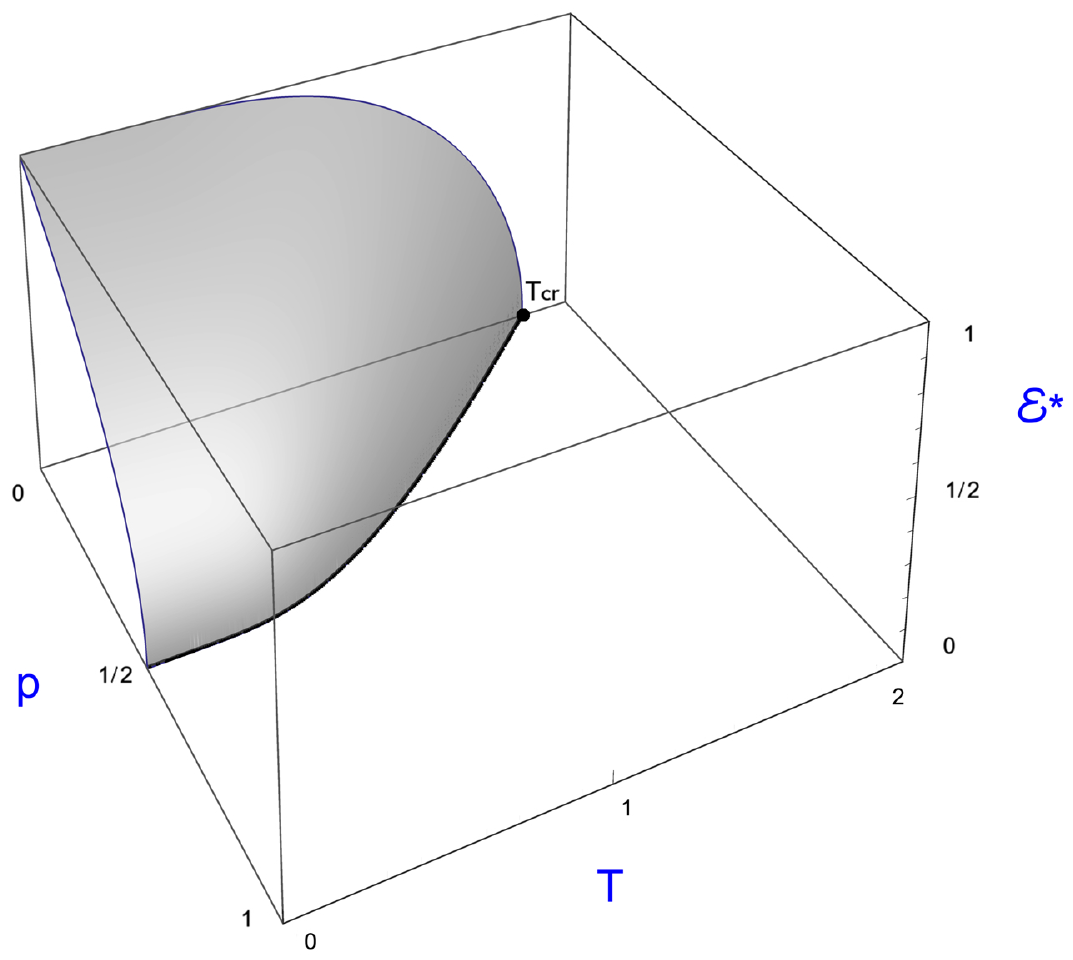}\caption{The first moment.}%
\label{firstmomentgraph}%
\end{figure}

\newpage

\begin{figure}[h]
\centering
\includegraphics[width=10cm]{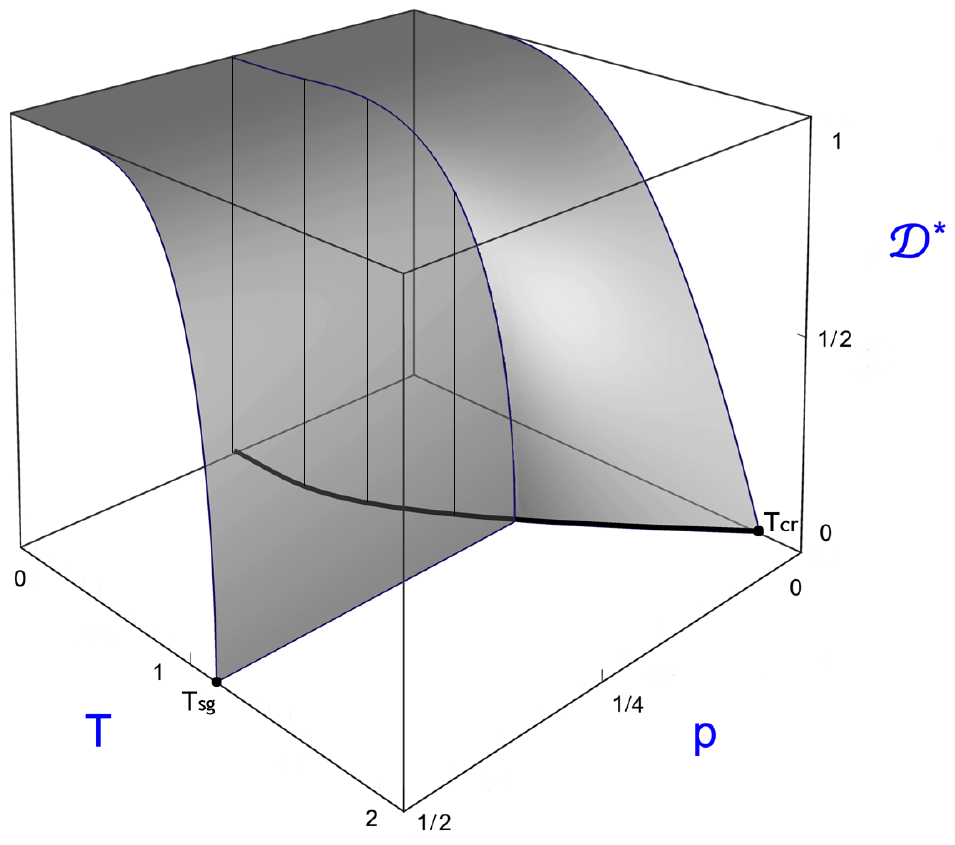}\caption{The second moment.}%
\label{secdmomentgraph}%
\end{figure}


\newpage
\bigskip

\noindent\textbf{Acknowledgments} Part of this work has been carried out in
the framework of the Labex Archimede (ANR-11-LABX-0033) and of the A*MIDEX
project (ANR-11-IDEX-0001-02), funded by the \textquotedblleft Investissements
d'Avenir" French Government programme managed by the French National Research
Agency (ANR). Part of this work, concerning the EA parameter, has been carried
out at IITP RAS. The support of Russian Foundation for Sciences (project No.
14-50-00150) is gratefully acknowledged. This work was partially supported by
the CNRS PICS grant ``Interfaces al\'eatoires discr\`etes et dynamiques de
Glauber'' and by the grant PRC No. 1556 CNRS-RFBR 2017-2019
`Multi-dimensional semi-classical problems of
Condensed Matter Physics and Quantum Mechanics ''. CM thanks the hospitality of the CPT-Luminy at Marseille.

\end{document}